\theoremstyle{plain}
\newtheorem{theorem}{Theorem}[section]
\theoremstyle{definition}
\newtheorem{definition}{Definition}
\theoremstyle{remark}
\newtheorem{remark}{Remark}
\begin{document}


\title{Constructing long-short stock portfolio with \\a new listwise learn-to-rank algorithm}

\author{Xin Zhang
\thanks{Email: alpoise@pku.edu.cn},~Lan Wu\thanks{Corresponding author. Email: lwu@pku.edu.cn}\\
\vspace{1ex}
School of Mathematical Sciences, Peking University, China\\
Zhixue Chen\\
IIIS, Tsinghua University, China}
\date{Dec 7,2020}
\maketitle
\begin{abstract}
Factor strategies have gained growing popularity in industry with the fast development of machine learning. Usually, multi-factors are fed to an algorithm for some cross-sectional return predictions, which are further used to construct a long-short portfolio. Instead of predicting the value of the stock return, emerging studies predict a ranked stock list using the mature learn-to-rank technology. In this study, we propose a new listwise learn-to-rank loss function which aims to emphasize both the top and the bottom of a rank list. Our loss function, motivated by the long-short strategy, is endogenously shift-invariant and can be viewed as a direct generalization of ListMLE. Under different transformation functions, our loss can lead to consistency with binary classification loss or permutation level 0-1 loss. A probabilistic explanation for our model is also given as a generalized Plackett-Luce model. Based on a dataset of 68 factors in China A-share market from 2006 to 2019, our empirical study has demonstrated the strength of our method which achieves an out-of-sample annual return of 38\% with the Sharpe ratio being 2.    

\end{abstract}
\vspace{2ex}
\noindent \small{\it Keywords:
Learn-to-rank,Long-short portfolio, Factor strategy, Machine learning}

\newpage

\section{Introduction}
Ever since the birth of active management, which can date back to as early as the publication of the CAPM \citep{sharpe1964capital,lintner1965security} and the Fama-French 3 factor model \citep{fama1993common}, there have been numerous well-documented pros and cons of its value. While \citet{carhart1997persistence} stands as a capstone of the conventional wisdom that ``the results do not support the existence of skilled or informed mutual fund portfolio managers'', Cremer's review of the most recent literature \citep{cremers2019challenging} actually challenges this opinion and suggests that the conventional wisdom is too negative, in the sense of both market-timing and stock-picking. For market-timing, this involves a prediction towards the market movement whose period ranges from months to milliseconds. As for stock-picking strategy, a basic framework is to find factors and combine them to generate a prediction for next period's returns. To gain this cross-sectional return, one can adopt long-short factor strategies that long the top ranked equities and short the bottom ranked ones, as is inspired by Fama and French. While a highly predictive factor could be confidential and printing money somewhere in this world, hundreds of trivial factors have been put forward in literatures \citep{tulchinsky2019finding,giglio2019thousands}. An open question is whether it is possible to generate strong factors from the trivial ones. We try to answer this question with a novel learn-to-rank method which, motivated by the nature of long-short trading behavior, aims to directly predict a cross-sectional ranked stock list rather than some exact stock returns.

Learn-to-rank is a class of supervised machine learning algorithms that has proven to be extremely successful in information retrieval (IR) field. Ranking algorithm has also become the core part of the recommender system that has been broadly used in web search, newsfeed, online shopping, and advertisement. Machine learning algorithms based on ranking has also demonstrated its power in text summarization and machine translation. However, these ranking algorithms have not gained enough attention in factor investing yet. Although many other machine learning algorithms have been used for developing factor strategies \citep{de2018advances,rasekhschaffe2019machine}, they do not view this problem from a ranking perspective. A crucial difference lies in that for IR we only care about the accuracy at the top, but for long-short strategy we want both the top and the bottom to be accurate. To bridge this gap, we need a learn-to-rank method that emphasizes both the top and the bottom. As a matter of fact, some researchers have casted the factor strategy into the ranking framework, but they mostly focus on adding alternative factors or neural network components \citep{song2017stock,feng2019temporal,fang2020neural}. As for satisfying the long-short need, very few have been covered. In \cite{song2017stock}, the authors reverse the rank labels and fit the model twice, in the hope that the two models can predict the top and bottom respectively. But despite the troublesome parameter tuning stuff, this approach may have contradictory outputs from the two models. Hence it may not guarantee a coherent rank and could tell us to buy and sell one stock at the same time. This motivates us to propose a new learn-to-rank algorithm that targets at a coherent rank list that views the top and the bottom equally important.        

In this paper, instead of the absolute stock returns, we focus on predicting the relative rank of the returns. This preference is explained not only by portfolio manager's task to beat a relative index, but also by the difficulty of making value prediction. This difficulty mainly arises from the blurry boundary of the input information and the low information-noise ratio of financial data. For example, some unexpected overnight news could severely affect the whole market; some stock prices could be generated by unpredictable trading behaviors. It is really not so clear any experts or factors can reliably predict individual stock returns, but somehow we are more comfortable to anticipate stock moves relative to other stocks or to a factor model. This rank opinion is also shared by \cite{song2017stock,feng2019temporal,zhu2011hybrid,wang2018stock} and those who use rank information coefficient (IC) to evaluate their factors.

Our main contribution of this study is twofold. First, we develop a new category of learn-to-rank loss function for long-short factor strategy. To the best of our knowledge, we are the first to propose a top and bottom focused learn-to-rank framework for long-short strategy. Particularly, we design a novel type of surrogate loss function and discuss its theoretical properties. Our loss function is shift-invariant due to its symmetric nature, and it can induce surrogate loss that is consistent with binary classification loss or permutation level 0-1 loss under different transformation functions. We also give a probabilistic explanation to the model. Secondly, we conduct a detailed empirical study to examine the performance of our model in China A-share market. We've achieved an annualized return of 38\% with the Sharpe Ratio being 2. Our method outperforms MLP, ListMLE and Song's using ListMLE twice.

The rest of this paper is organized as follows. Section \ref{backgrounds} gives a brief introduction to the background knowledge, including the factor strategy, learn-to-rank framework and ListMLE algorithm. Section \ref{our model} presents our model and its theoretical analysis. Section \ref{empirical} compares our model and some other models empirically in China A-share market. And we summarize our key findings and some ideas for future study in Section \ref{conclusion}.

\section{Backgrounds}
\label{backgrounds}

\subsection{Factor Strategy}
Factors are at the core of factor strategies. Hundreds of potential pricing factors have been published in literatures \citep{harvey2016and,hou2017comparison}, even more could be found in industry. While one branch of the factor strategy is to test these factors in a multiple testing framework \citep{feng2020taming}, another branch, which is also more popular among practitioners, is to generate stronger factors from the factor zoo. The most traditional ways of combing factors might be sequential filtering, majority vote and linear regression. Econometricians and statisticians further extend the linear model from different perspectives, such as SVM, Adaboost and graphical model \citep{liu2016semiparametric}. Machine learning has also been used both in generating and combing factors. For example, with textual analysis, machine learning is able to extract sentimental factors from the news. As for combing factors, these algorithms usually view the factors as the input and the returns as the output, and cast the problem into a classification or regression problem.
 
Long-short strategy is a natural generalization of pure long investing when investors want to capitalize an overvalued asset. \citet{jacobs1993long} have examined the ways of implementing long-short strategies, the theoretical and practical benefits, and some practical concerns of long-short strategies. Nowadays, long-short strategy has become extremely popular among hedge funds. A popular way of long-short factor investing is to first sort the stocks, divide them into 10 groups, then long the top ranked group and short the bottom ranked group. This ranking approach is welcomed since it does not require estimating the factor loadings. 
  
\subsection{Overview of Learn-to-rank}
Learn-to-Rank, originating from webpage search, can include many of the previous methods as its subset. Its basic framework is to take documents (in our scenario, stocks) along with their features as the input and a rank list based on the corresponding relevance judgments as the output. A surrogate loss function is employed to learn a scoring function which assigns a score to each document based on their features such that as long as we sort the documents based on their scores we will give a good prediction of their rank. The scoring function does not vary from documents. Most learn-to-rank algorithms can be categorized into pointwise, pairwise and listwise approach based on their loss function. 

To evaluate the prediction, especially when we care more about whether the top positions have been ranked correctly, NDCG is the most popular metric which is defined as follows:
	\begin{equation}
	\text{NDCG}@k(\pi,l) = \frac{1}{Z_k}\sum_{j=1}^{k}G(l_{\pi^{-1}(j)})\eta(j)
	\end{equation}
	where $\pi$ is the predicted list, $\pi^{-1}(j)$ denotes the document ranked at position $j$ of the list $\pi$. $l$ stands for the relevance judgments and $G(z) = (2^z-1)$ is a usual rating function of the document. $\eta(j)$ is a position discount factor (usually set to be $1/\log(1+j)$). The cutoff of the position $k$ means that we only care about the accuracy of the first $k$ positions.  $Z_k$ is the normalizing value to set NDCG fall in the range $[0,1]$. So we can use 1-NDCG as the true loss. Besides, permutation level 0-1 loss reaches 0 if and only if the two lists are identical. Binary classification loss is to first label the top 50\% of the rank list as 1 and the rest -1. Then a rank list $\pi$ achieves 1 if and only if it labels every document correctly. Among many other ranking metrics, GAUC \citep{song2015recommending} also considers the accuracy of the top and the bottom of a rank list, but its limitation lies in that it only considers the special case of $\{1,0,-1\}$ labels.

\subsection{ListMLE}

ListMLE is a state-of-the-art listwise learn-to-rank algorithm, which defines the probability distribution based on the Plackett-Luce Model in a top-down style. It aims to utilize a likelihood loss as the surrogate loss, defined as:
	\begin{equation}
		\mathcal{L}(f,x,y) = -\log \mathbb{P}(y|x;f) = -\log \prod_{i=1}^{n}\frac{\psi(f(x_{y^{-1}(i)}))}
		{\sum_{k=i}^{n}\psi(f(x_{y^{-1}(k)}))},
		\label{loss_listmle}
	\end{equation}
	where $y^{-1}(i)$ represents the item that is labeled at the $i$-th position. $x$ is the feature vector, $n$ is the sample size, and $f$ is the scoring function. The function $\psi(\cdot)$ is the transformation function that maps the score to $\mathbb{R}^+$. $\psi(\cdot)$ is usually taken to be linear, exponential or sigmoid. For simplicity, we denote $\psi_i:=\psi(f(x_{y^{-1}(i)}))$.

A probabilistic explanation of the Plackett-Luce model is the vase model metaphor given by \citet{silverberg1980statistical}. Consider drawing balls from a vase full of colored balls. The number of balls of each color is in proportion to $\psi_i$. Suppose there are infinite number of balls if non-rational proportions are needed. At the first stage a ball $c_1$ is drawn from the vase; the probability of this selection is $\psi_1/\sum_{i=1}^{n}{\psi_i}$. At the second stage, another ball is drawn - if it is the same color as the first, then put it back, and keep on trying until a new color $c_2$ is selected; the probability of this second selection is $\psi_2/\sum_{i=2}^{n}{\psi_i}$. Continue through the stages until a ball of each color has been selected. Then the probability of the color sequence is as shown in equation \ref{loss_listmle}.


Previous theoretical analysis on ListMLE has shown that it is consistent with the permutation-level 0-1 loss. Intuitively speaking, this means that for $n$ fixed scores ${\psi_1,...,\psi_n}$, the loss defined in equation \ref{loss_listmle} achieves its minimal among all permutations if it is the descending sequence of ${\psi_1,...,\psi_n}$. A strict definition of consistency in \cite{xia2008listwise} is given as follows:
	\begin{definition}
		We define $\Lambda_y$ as the space of all possible probabilities on the permutation space Y, i.e, $\Lambda_y:=\{p\in R^{|Y|}:\sum_{y\in Y}p_y=1,p_y\geq 0\}$. 	
	\end{definition}

	\begin{definition}
	The loss $\phi_y(g)$ is consistent on a set $\Omega \subset R^n$ with respect to the permutation-level 0-1 loss, if the following conditions hold: $\forall p \in \Lambda_y$, assume $y^*=\arg \max_{y\in Y} p_y$ and $Y_{y^*}^c$ denotes the space  of permutations after removing $y^*$, we have
	$$
	\inf_{g\in \Omega}Q(g) < \inf_{g\in \Omega, \text{ sort}(g)\in Y_{y^*}^c} Q(g), \quad
	\text{where} \ Q(g) = \sum_{y\in Y}P(y|x)\phi_y(g(x)).
	$$
	\end{definition}
	
Although ListMLE is theoretically consistent with the permutation level 0-1 loss (and not the NDCG loss), it actually has good empirical performance measured by NDCG\citep{tax2015cross,xia2008listwise,gao2014democracy}. Many extensions of ListMLE might also be inspiring when casting ListMLE into factor strategy, such as position-aware ListMLE \citep{lan2014position}, dyad ranking \citep{schafer2015dyad} and multi-view ranking \citep{gao2014democracy}.

\section{Our Model}
\label{our model}
In this section, we will propose our model --- ListFold. Motivated by the long-short strategy, we suggest a new kind of surrogate loss function that views the top and the bottom equally important as they both contribute to the portfolio's pnl. Without loss of generality, we assume we are given even number of stocks (documents) to rank.

\subsection{ListFold}
For $2n$ documents $X_1,...,X_{2n}$, the observed rank $y$ and the scoring function $f$, we try to decompose a permutation into an ordered stepwise pair selection procedure: the first long-short pair, the second long-short pair until the $n$-th long-short pair. So we can define a probability as follows:
	\begin{equation}
		\mathbb{P}_c(y|X,f) = \prod_{i=1}^{n} \frac{\psi(f_i - f_{2n+1-i})}
		{\sum_{i\leq u \neq v\leq 2n+1-i}\psi(f_u-f_v)},\label{prob}
	\end{equation}
	where $f_i := f(X_{y^{-1}(i)})$ represents the score of the document observed at the $i$-th position and $\psi$ is the transformation function as in ListMLE. The loss function is then defined as the negative log-likelihood:
	\begin{align}
	\mathcal{L}_c(f,y,X) &= -\log\mathbb{P}_c(y|X,f) \notag\\
		&= -\sum_{i=1}^{n}
			\left(\log{\psi(f_i - f_{2n+1-i}})-\log\sum_{i\leq u \neq v\leq 2n+1-i}\psi(f_u-f_v)\right). \label{llk}
	\end{align}

The intuition behind our loss setting is similar with ListMLE which decomposes the permutation probability into a stepwise conditional probability. The difference lies in that for each step, instead of picking out one document, our goal is to pick out a \textit{pair} that has the maximal score difference and further place them in the correct pairwise preference order. Based on the previous $i-1$ steps being ranked correctly, we can write the conditional probability as:
	\begin{align}
		\mathbb{P}_c^i:&=P_i \left (y^{-1}(i,2n+1-i)\big{\vert} X,y^{-1}(1,2n),...,y^{-1}(i-1,2n-i+2);f\right ) \notag \\
		&=\frac{\psi(f_i - f_{2n+1-i})+\psi(f_{2n+1-i}-f_i)}
			{\sum_{i\leq u\neq v\leq 2n+1-i}\psi(f_u-f_v)}
			\cdot
			\frac{\psi(f_i - f_{2n+1-i})}
			{\psi(f_i - f_{2n+1-i})+\psi(f_{2n+1-i}-f_i)} \notag \\
		&= \frac{\psi(f_i - f_{2n+1-i})}{\sum_{i\leq u \neq v\leq 2n+1-i}\psi(f_u-f_v)},\quad i=1,...,n.
	\end{align}
This gives a natural explanation to the symmetry of the denominator in $\mathbb{P}_c$. Also, due to this symmetry, ListFold is born to be shift-invariant, while ListMLE is shift-invariant only when $\psi$ is exponential.

When $\psi$ is exponential, a probabilistic explanation of our loss can be derived as a natural generalization of the vase model. Consider a multi-stage experiment of throwing darts. There are $2n$ planks stacked together. The width, length and height are $(w_i,l_i,1)$ respectively, subject to the restriction that $w_i*l_i=1$. At the first stage, simultaneously, person A and person B each randomly throw a dart towards the width and length direction of the planks. If their darts fall on the same plank then put the plank back and re-throw. Otherwise, mark A's plank $A_1$ and B's plank $B_1$. Continue through the stages without putting back the marked planks and mark the planks $A_i, B_i$ at step $i$, until all the planks have been marked. Then the probability of the planks sequence $\{A_1,...,A_n,B_n,...,B_1\}$ is:
	\begin{equation}
		\mathbb{P}= \prod_{i=1}^{n}\mathbb{P}_i:= \prod_{i=1}^{n}\frac{w_{A_i}*l_{B_i}}{(\sum_{j=i}^{n}w_{A_j}+w_{B_j})*(\sum_{j=i}^{n}l_{A_j} + l_{B_j}) - 2(n+1-i)},
	\end{equation}
this is equation \ref{llk} taking $\psi$ exponential and $f_i = \log(w_i)$.

\begin{figure}[H]
	\centering
	\includegraphics[scale=0.4]{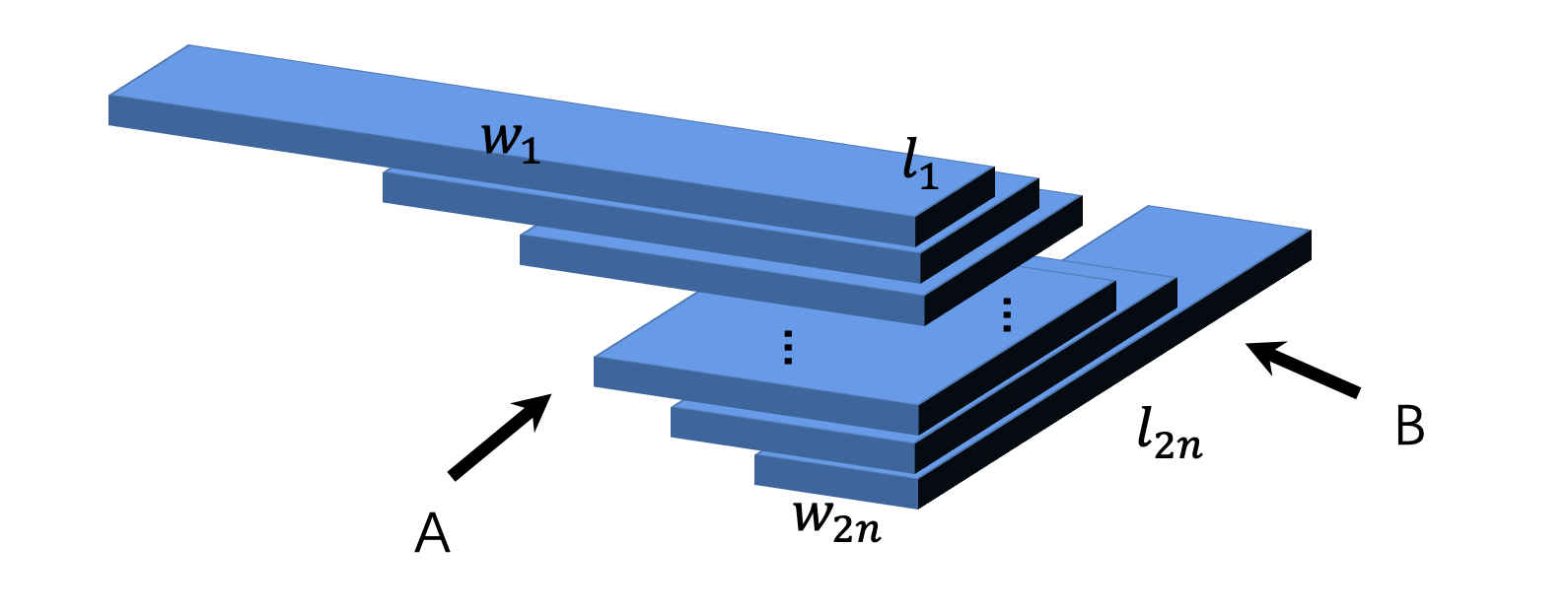}
	\label{plank}
	\caption{The probabilistic explanation of $\mathcal{L}_c^e$}	
\end{figure}

Note that, our loss is not just a trivial generalization of ListMLE. Actually, a naive generalization could be
	\begin{equation}
		\mathbb{P}_t(\pi|X,f) = 
		\prod_{i=1}^{n} \frac{\psi(f_i)}{\sum_{i\leq u\leq 2n}\psi(f_u)}
		\times 
		\prod_{j=1}^n \frac{\psi(-f_{2n+1-j})}
		{\sum_{j\leq u\leq 2n}\psi(-f_{2n+1-u})},
	\end{equation} 
which is more like combing two ListMLE by reversing the labels. The consistency of this surrogate loss with the permutation level 0-1 loss holds naturally (since the ground truth permutation achieves minimum on both parts, it would also minimize their product). However, $\mathbb{P}_t$ does not define a probability on the \textit{permutation space}, whereas our generalization $\mathbb{P}_c$ implies a probability model when $\psi$ is exponential. As we we will see in the following analysis, our loss also implies a pairwise intuition compared with $\mathbb{P}_t$.  
 
\subsection{Theoretical Analysis}

 In this part, we discuss the consistency of our loss function $\mathcal{L}_c$ with respect to the permutation level 0-1 loss. We denote $\mathcal{L}_c$ as $\mathcal{L}_c^s$ and $\mathcal{L}_c^e$ respectively when $\psi$ is sigmoid or exponential.

Before we start, we first spare some effort in understanding why this is a new challenging problem. In the previous work of ListMLE, the surrogate loss functions are all order sensitive, which basically says that if we exchanged the positions of any two documents towards the ground truth, the loss would decrease. The formal definition of order sensitive can be found in \cite{xia2008listwise}. Order sensitiveness means that as long as the neural network learns $f$ better at any two points, the loss will reduce. But we might actually prefer $f$ having a more holistic view when the loss reduces --- if we admit in the stock market we won't predict the rank perfectly correct anyway, we should be willing to allow the surrogate loss explores something more complex rather than just being order sensitive and allows no local minimum. From this perspective, our loss function $\mathcal{L}_c^e$ serves as an anomaly that might trigger new idea for proposing surrogate loss functions. Consider permutation over four numbers (5,4,1,0). Suppose we start with $\mathcal{L}_c^e(1,5,4,0) = 4.78$, and if we exchange the first two numbers, the loss actually increases: $\mathcal{L}_c^e(5,1,4,0) = 6.65$, whereas the ground truth $\mathcal{L}_c^e(5,4,1,0) = 0.65$ is still the minimal.

Now we are going to state two theorems that characterize the consistency of $\mathcal{L}_c^s$ and $\mathcal{L}_c^e$ respectively. These two theorems try to answer what kind of true loss we are targeting at when we minimize ListFold defined in equation \ref{llk}. The theoretical analysis will help us interpret the model outputs better and construct optimal long-short portfolio accordingly.

\begin{theorem}
	If the transformation function $\psi$ is sigmoid, suppose $a_1 \geq a_2 \geq...\geq a_n \geq b_n \geq b_{n-1} \geq ... \geq b_1$, and $\bold{f}:=(f_1,...,f_{2n})$ is a permutation of all $a_i$'s and $b_i$'s. Then our loss $\mathcal{L}_c^s(f,y,X)$ is consistent with the binary classification loss.
\end{theorem}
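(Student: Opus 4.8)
\emph{Proof proposal.} The plan is to exploit a special algebraic feature of the sigmoid that collapses the denominator in equation \ref{llk}, and then to solve the resulting combinatorial optimisation by an exchange argument. Write $\sigma(z)=1/(1+e^{-z})$ for the sigmoid transformation $\psi$, and use the identity $\sigma(z)+\sigma(-z)=1$. Grouping the ordered index pairs in $\sum_{i\le u\ne v\le 2n+1-i}\sigma(f_u-f_v)$ into unordered pairs $\{u,v\}$, each such pair contributes $\sigma(f_u-f_v)+\sigma(f_v-f_u)=1$, so the denominator equals $\binom{2(n+1-i)}{2}$, a constant that does not depend on the permutation $\mathbf{f}$. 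Hence, with $C:=\sum_{i=1}^{n}\log\binom{2(n+1-i)}{2}$,
\[
\mathcal{L}_c^s(f,y,X)=-\sum_{i=1}^{n}\log\sigma\!\left(f_i-f_{2n+1-i}\right)+C,
\]
i.e.\ $\mathcal{L}_c^s$ is, up to the additive constant $C$, a sum of $n$ pairwise logistic losses on the long--short pairs $(f_i,f_{2n+1-i})$, and minimising it over arrangements of the multiset $\{a_1,\dots,a_n,b_1,\dots,b_n\}$ is the same as maximising $\sum_{i=1}^{n}\log\sigma(f_i-f_{2n+1-i})$.

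Next I would observe that $\mathbf{f}$ enters this objective only through (i) the partition of the $2n$ values into the $n$ unordered long--short pairs $\{f_i,f_{2n+1-i}\}$ and (ii) the orientation of each such pair: since the level index $i$ has dropped out of the summand, which pair occupies which level is irrelevant. Monotonicity of $\log\sigma$ shows a pair $\{x,y\}$ is best oriented with the larger value in the lower-indexed slot, contributing $\log\sigma(|x-y|)$. The problem thus reduces to choosing a partition of $\{a_i\}\cup\{b_i\}$ into $n$ pairs that maximises $\sum_{k}\log\sigma(|d_k|)$, where $d_k$ is the within-pair difference.

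The crux, and the step I expect to be the main obstacle, is an uncrossing argument showing that every maximising partition matches each $a_i$ with some $b_j$. In any partition the number of same-$a$ pairs equals the number of same-$b$ pairs, by a parity count on the $n$ values $a_i$ and the $n$ values $b_j$; so if a pair $\{a_i,a_j\}$ with $a_i\ge a_j$ occurs, some pair $\{b_k,b_l\}$ with $b_k\ge b_l$ also occurs, and replacing these two pairs by $\{a_i,b_k\}$ and $\{a_j,b_l\}$ gives $a_i-b_k\ge a_i-a_j$ and $a_j-b_l\ge b_k-b_l$ (using $a_j\ge a_n\ge b_n\ge b_k$), so by monotonicity of $\log\sigma$ the objective does not decrease, and strictly increases whenever $a_n>b_n$. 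Iterating removes all same-type pairs, so the maximum is attained at a partition in which each pair contains one $a_i$ and one $b_j$, and under the generic strict separation $a_n>b_n$ every maximiser is of this form.

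Finally, such a partition with the optimal orientation puts the $a$-value of each pair in the lower-indexed slot (legitimate since every $a_i\ge$ every $b_j$ here), hence places $a_1,\dots,a_n$ in positions $1,\dots,n$ and $b_1,\dots,b_n$ in positions $n+1,\dots,2n$ --- exactly a ranking with zero binary-classification loss; conversely, any minimiser of $\mathcal{L}_c^s$ must use the larger-first orientation and (when $a_n>b_n$) an $a$-versus-$b$ partition, hence has zero binary-classification loss. Therefore $\inf_{\mathbf{f}}\mathcal{L}_c^s<\inf_{\mathbf{f}\,:\,\text{b.c. loss}>0}\mathcal{L}_c^s$, which is the sense of consistency with the binary-classification loss intended here (the specialisation to a point mass $P$ of the consistency notion recalled before the theorem). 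If one allows equalities $a_n=b_n$, the statement persists only up to ties among equally scored documents.
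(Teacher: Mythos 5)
Your proposal is correct, and its first half coincides with the paper's own proof: both exploit the identity $\sigma(z)+\sigma(-z)=1$ to collapse the denominator into a permutation-independent constant, reducing $\mathcal{L}_c^s$ to the sum of pairwise logistic losses $\sum_{i=1}^n\log(1+e^{-(f_i-f_{2n+1-i})})$ plus a constant (your constant $\sum_{i=1}^n\log\binom{2(n+1-i)}{2}$ is in fact the correct count of unordered pairs; the paper's $C_n=n(n+1)$ is a harmless slip). Where you genuinely diverge is the combinatorial step. The paper takes any four scores and, using convexity together with monotonicity of $\log(1+e^{-x})$, shows that the ``crossing'' pairing beats both the nested and the adjacent pairing; iterating this identifies the exact minimizing pairing $(a_1,b_n),(a_2,b_{n-1}),\dots,(a_n,b_1)$, with the assignment of pairs to levels irrelevant. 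You instead use only monotonicity, via a parity count plus an $aa$/$bb$ uncrossing exchange, to show that every optimal partition is mixed (one $a$ and one $b$ per pair) with the larger element in the top slot --- a weaker structural conclusion, but exactly what zero binary-classification loss requires, so the theorem as stated follows. Your route is more elementary (no convexity needed), and you are more careful than the paper about ties ($a_n=b_n$) and about phrasing the conclusion as the strict inf-gap demanded by the consistency definition. What the paper's route buys is the precise description of the minimizers, which is what underwrites its closing remark that $\mathcal{L}_c^s$ is consistent with the binary classification loss but \emph{not} with the permutation-level 0-1 loss; if you wanted that finer statement, you would need to add the convexity comparison (crossing beats nested), since monotonicity alone does not discriminate among the mixed pairings.
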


\begin{proof} \
 
	By definition $\text{sigmoid}(x) = 1/(1+e^{-x})$ and the property that $\text{sigmoid}(x) + \text{sigmoid}(-x) = 1$,
	\begin{align}
	\mathcal{L}_c^s(f,y,X) 
		&= -\sum_{i=1}^{n}
			\left(\log{\psi(f_i - f_{2n+1-i}})-\log\sum_{i\leq u \neq v\leq n+1-i}\psi(f_u-f_v)
			\right) \notag\\
	&= -\sum_{i=1}^{n}
			\left(\log{\text{sigmoid}(f_i - f_{2n+1-i}}) + (2n-2i+2)\right) \notag \\
			&= \sum_{i=1}^{n}
			\log (1+e^{-f_i + f_{2n+1-i}}) + C_n,
	\end{align}
	where $C_n=n(n+1)$ is a constant for a fixed $n$. Since $\log(1+e^{-x})$ is convex and monotone increasing, for any four scores $f_i \leq f_j \leq f_k \leq f_l$, consider all their permutations $(a,b,c,d) \in \text{Perm}(f_i,f_j,f_k,f_l)$ we make two pairs out of them:$(a,b),(c,d)$ and consider the loss:
	$$
	\mathcal{L}_c^s(a,b,c,d) = \log(1+e^{-(a-b)}) + \log(1+e^{-(c-d)}).
	$$	
	To minimize the loss, it suffices $a>b,c>d$ otherwise we can always swap their positions to reduce the loss. So we only need to compare three cases $f_l-f_i,f_l-f_j,f_l-f_k$ :
	\begin{align}
	& \log(1+e^{-(f_l-f_j)}) + \log(1+e^{-(f_k-f_i)}) \leq \log(1+e^{-(f_l-f_i)}) + \log(1+e^{-(f_k-f_j)}), \notag \\
	& \log(1+e^{-(f_l-f_j)}) + \log(1+e^{-(f_k-f_i)}) \leq \log(1+e^{-(f_l-f_k)}) + \log(1+e^{-(f_j-f_i)}). \notag
	\end{align}
	The first inequality is because of the convexity and the second one is because of the monotonicity.
	Keep using this rule for every two pairs, it is straightforward that the loss is minimized as long as the permutation pairs together $(a_1,b_n),(a_2,b_{n-1}),...,(a_n,b_1)$. The permutation of these $n$ pairs actually makes no difference. So $\mathcal{L}_c^s$ is consistent with the binary classification loss but not the permutation level 0-1 loss. 
\end{proof}

\begin{theorem}
	If the transformation function $\psi$ is exponential, suppose $a_1 \geq a_2 \geq...\geq a_n \geq b_n \geq b_{n-1} \geq ... \geq b_1$, and $(f_1,...f_n)$ is a permutation of all $a_i$'s and $(f_{-1},...f_{-n})$ is a permutation of all $b_i$'s, then denote $\bold{f}:=(f_1,...f_n,f_{-n},...,f_{-1})$. Then our loss $\mathcal{L}_c^e(f,y,X)$		
	$$
	\mathcal{L}_c^e(\bold{f},y,X) = \sum_{i=1}^{n}\left ( -(f_i - f_{-i})+\log \sum_{-i\leq s\neq t \leq i}e^{f_s-f_t}\right)
	$$
achieves its minimum at the descending sequence $\bold{f}^*=(a_1,...,a_n,b_n,...,b_1)$.
\label{thm2}
\end{theorem}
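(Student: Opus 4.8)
\medskip
\noindent\textbf{Proof strategy.}
The plan is to reduce the statement to a purely combinatorial minimization of the step-$i$ normalizers. Write $D_i:=\sum_{-i\le s\ne t\le i}e^{f_s-f_t}$ for the denominator of the $i$-th summand in equation~\ref{llk}, so that
\[
\mathcal{L}_c^e(\mathbf{f},y,X)=\sum_{i=1}^n\bigl(f_{-i}-f_i\bigr)+\sum_{i=1}^n\log D_i .
\]
On the class in the theorem, $f_1,\dots,f_n$ is a permutation of the $a_i$'s and $f_{-1},\dots,f_{-n}$ a permutation of the $b_i$'s, so the first sum equals $\sum_i b_i-\sum_i a_i$, a constant; and $D_1$ --- the sum of $e^{f_s-f_t}$ over ordered pairs of all $2n$ scores, since nothing has been removed at step~$1$ --- is symmetric in the $f$'s and hence also constant on the class. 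So it suffices to show that $\mathbf{f}^*$ minimizes $\sum_{i=2}^n\log D_i$.

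I would then pin down what each $D_i$ depends on. The multiset of scores sitting at positions $i,\dots,2n+1-i$ is an $(n+1-i)$-element subset $A_i$ of $\{a_1,\dots,a_n\}$ together with an $(n+1-i)$-element subset $B_i$ of $\{b_1,\dots,b_n\}$, and since $\sum_{s\ne t}e^{f_s-f_t}$ is a symmetric function of its arguments,
\[
D_i=\Bigl(\sum_{a\in A_i}e^{a}+\sum_{b\in B_i}e^{b}\Bigr)\Bigl(\sum_{a\in A_i}e^{-a}+\sum_{b\in B_i}e^{-b}\Bigr)-2(n+1-i).
\]
The descending ordering $\mathbf{f}^*$ realizes, simultaneously for every $i$, the choice $A_i=\{a_i,\dots,a_n\}$ (the $n+1-i$ smallest $a$'s) and $B_i=\{b_i,\dots,b_n\}$ (the $n+1-i$ largest $b$'s). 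Hence the theorem follows once I prove the one-step claim: \emph{for every $m$, among all $m$-element subsets $A$ of $\{a_j\}$ and $B$ of $\{b_j\}$, the quantity $(\sum_{A}e^{a}+\sum_{B}e^{b})(\sum_{A}e^{-a}+\sum_{B}e^{-b})$ is minimized by taking the $m$ smallest $a$'s and the $m$ largest $b$'s.} Intuitively this is true because, in the form $D_i=2\sum_{\{s,t\}}\cosh(f_s-f_t)$, the dominant contributions are the ``cross'' terms $\cosh(a-b)$ with $a$ a top score and $b$ a bottom score, and these are smallest exactly when the retained $a$'s and $b$'s are pushed toward the common boundary $a_n\ge b_n$. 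An equivalent route would be induction on $n$: strip off step~$1$, whose $D_1$ is constant, and show that deleting the largest $a$ and the smallest $b$ at step~$1$ leaves the sub-problem of smallest optimal loss; this leads to the same estimate.

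The crux --- and the place where I expect essentially all the difficulty to lie --- is this one-step minimization. It cannot be obtained by the pairwise-exchange move that settled the sigmoid case above, because $\mathcal{L}_c^e$ is \emph{not} order sensitive (this is what the four-score example $(5,4,1,0)$ in the preceding discussion illustrates, and the failure of order-sensitivity persists inside the present class once $n\ge3$): an adjacent transposition toward $\mathbf{f}^*$ can increase an individual $D_i$. One must instead compare two competing subsets that differ by moving a single $a$ (resp.\ a single $b$) from its group's far end toward the $a/b$ boundary, and show that the induced change in $(\sum e^{v})(\sum e^{-v})$ is favorable; this is the step that uses the hypothesis $a_1\ge\cdots\ge a_n\ge b_n\ge\cdots\ge b_1$ in full force, since crude bounds lose a spurious factor proportional to the block size and the delicate point is to keep the estimate sharp. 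Granting the one-step minimization, any admissible $\mathbf{f}$ satisfies $D_i\ge D_i(\mathbf{f}^*)$ for each $i$, hence $\sum_i\log D_i\ge\sum_i\log D_i(\mathbf{f}^*)$, and adding back the constant linear term gives $\mathcal{L}_c^e(\mathbf{f},y,X)\ge\mathcal{L}_c^e(\mathbf{f}^*,y,X)$, as claimed.
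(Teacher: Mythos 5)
Your reduction is sound and, as a framing, arguably cleaner than the paper's: writing $\mathcal{L}_c^e=\sum_i(f_{-i}-f_i)+\sum_i\log D_i$, observing that on the stated class the linear part and $D_1$ are constant, and noting that the descending arrangement retains, at every step $i$, exactly the $n+1-i$ smallest $a$'s and the $n+1-i$ largest $b$'s, correctly reduces the theorem to your one-step subset claim; granted that claim you would even obtain the stronger term-by-term domination $D_i(\mathbf{f})\ge D_i(\mathbf{f}^*)$ for each $i$, whereas the paper argues by induction on $n$, peeling off the outermost pair (whose contribution is order-independent) and then moving the outermost score into its sorted position.

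The genuine gap is that the one-step claim is precisely where all of the analytic content lies, and you do not prove it --- you state it, correctly observe that order-sensitivity/adjacent transpositions cannot deliver it, and then ``grant'' it. The claim is true, but the verification is delicate: exchanging an arbitrary retained $a$ for a smaller unused one can strictly increase $D_i$ (take retained $a$-scores $\{M,1\}$ with $M$ large and two $b$'s near $0$; swapping $1$ for an unused $a$-value $0$ increases $(\sum e^{v})(\sum e^{-v})$ because $\cosh(M-0)>\cosh(M-1)$ dominates). The exchange that works is to replace the \emph{maximum} retained $a$-score $a_{\max}$ by a smaller unused $a$-value $\alpha$: the change equals $2\sum_v\left[\cosh(v-\alpha)-\cosh(v-a_{\max})\right]$ over the other retained scores $v$, and its sign is exactly what the paper controls in inequality \ref{ieq_cnt}: with $\delta=e^{a_{\max}-\alpha}$, each term is bounded using convexity by $\pm(\delta+1/\delta-2)$ according to whether $v\le\alpha$ or $v>\alpha$, and the number of retained scores below $\alpha$ (at least the $m$ $b$'s, since $a_n\ge b_n$) weakly exceeds the number above (at most $m-1$ $a$'s) --- this counting is where the hypothesis and the equal long/short split enter, and it is what defeats the ``spurious factor'' you worry about. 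Iterating this maximal-element swap (and its mirror image for the $b$'s, obtained by negating scores) proves your lemma and hence the theorem; without this step, your proposal is a valid reduction plus an unproven lemma rather than a proof of Theorem \ref{thm2}.
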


\begin{proof} \
 
	If $f_i < f_{-i}$, we can always reduce the loss by exchanging their positions, so the result holds naturally for n = 1. Suppose the result holds for 1,...,n, we next prove the result for n+1.
	
	Consider $2n+2$ numbers $\{ \alpha, a_1,...,a_n,b_n,...,b_1,\beta\}$. Denote $S_k = \{a_k,...,a_n,b_n,...,b_k\}$, and $S_k^\alpha = S_k - a_k + \alpha$, where ``$+$'' and ``$-$'' are taken over sets, representing adding and removing an element from a set. Denote the permutation space over the set $S_k$ by $\text{Perm}(S_k)$ and denote $\ell(S_k):= \log \sum_{-k\leq s\neq t \leq k}e^{f_s-f_t}$. Then for any permutation $(f_1,...,f_n, f_{-n},...,f_{-1}) \in \text{Perm}(S_1)$, by the following decomposition: 
	\begin{equation}
		\mathcal{L}_c^e([\alpha, f_1,...,f_n,f_{-n},...,f_{-1},\beta])
		= -(\alpha-\beta)+ \ell (S_1+\alpha+\beta)+ \notag \\
		\mathcal{L}_c^e([f_1,...,f_n,f_{-n},...,f_{-1}]),
	\end{equation}
	and the observation that the permutation of $\{f_1,...,f_{-1}\}$ doesn't influence the first two items, the following holds immediately by induction:
	$$
	\mathcal{L}_c^e([\alpha, f_1,...,f_n,f_{-n},...,f_{-1},\beta])
	\geq
	\mathcal{L}_c^e([\alpha, a_1,...,a_n,b_n,...,b_1,\beta]).
	$$ 
	Next, we try to reduce the loss by moving $\alpha$ to its ground truth position, which requires some detailed discussion on the rank position of $\alpha$.
	
	If $a_{i+1} \leq \alpha < a_{i}$ where we've implicitly defined $a_{n+1}=b_n$, then the difference between the current list and the ground truth is:  
	\begin{align}
		\Delta \mathcal{L}:&=\mathcal{L}_c^e([a_1,...,a_i,\alpha,a_{i+1},...,b_n,...,b_1,\beta]) - \mathcal{L}_c^e([\alpha, a_1,...,a_n,b_n,...,b_1,\beta])\notag \\
		&= \sum_{k=1}^{i}\ell(S_k^{\alpha}) - \ell(S_k), 
	\end{align}
	and 
	\begin{align}
		\ell(S_k^{\alpha}) - \ell(S_k) &= \log(\frac{\sum_{S_k^{\alpha}}e^{f_s}\sum_{S_k^{\alpha}}e^{-f_s} - 2(n-k+1)}
		{\sum_{S_k}e^{f_s}\sum_{S_k}e^{-f_s} - 2(n-k+1)})	\notag \\
		&= \log(1 - \frac{\sum_{f_s\in S_k-a_k}e^{\alpha-f_s}(e^{a_k-\alpha}-1)+e^{f_s-\alpha}(e^{\alpha-a_k}-1)}
		{\sum_{S_k}e^{f_s}\sum_{S_k}e^{-f_s} - 2(n-k+1)}) \notag \\
		&= \log(1 - \frac{\sum_{f_s\in S_k-a_k}e^{\alpha-f_s}(\delta_k-1)+e^{f_s-\alpha}(1/\delta_k-1)}
		{\sum_{S_k}e^{f_s}\sum_{S_k}e^{-f_s} - 2(n-k+1)}), \notag \\
	\label{dlk}
	\end{align}
	where $\delta_k:= e^{a_k-\alpha}$. Since
	$$ 
	e^{\alpha-f_s}\geq 
	\begin{cases}
		1 &  \text{on} \  f_s \leq \alpha, f_s \in S_k \\
		1/\delta_k & \text{on} \  f_s > \alpha, f_s \in S_k  
	\end{cases}
	$$
	we have
	\begin{align}
		&\sum_{f_s\in S_k-a_k}e^{\alpha-f_s}(\delta_k-1)+e^{f_s-\alpha}(1/\delta_k-1) \notag\\
		&\geq \#_{\{S_k-a_k,f_s\leq\alpha\}}(\delta_k -1+1/\delta_k-1) + \#_{\{S_k-a_k,f_s>\alpha\}}(1-1/\delta_k+1-\delta_k) \notag \\
		&= (\delta_k+1/\delta_k-2)(\#_{\{S_k-a_k,f_s\leq\alpha\}}- \#_{\{S_k-a_k,f_s>\alpha\}})\notag\\
		&\geq 0,
		\label{ieq_cnt}
	\end{align}
	where $\#_{\{S\}}$ denotes the number of elements in the set $S$. The last inequality comes from the fact that $b_k\leq...\leq b_n\leq a_n\leq...\alpha\leq...a_k$. Plugging the inequality \ref{ieq_cnt} into equation \ref{dlk}, we have
	\begin{equation}
		\ell(S_k^{\alpha}) - \ell(S_k) \leq  0. 
	\end{equation}
	
	Due to the symmetry of $\alpha$ and $\beta$, we can similarly prove that moving $\beta$ to its ground truth position will reduce the loss by taking the scores to be their opposite numbers.
\end{proof}

\begin{remark}
In Appendix \ref{appendixA}, we also prove the case $b_{n-1} \leq \alpha \leq b_n$ and discuss the case $\alpha \leq b_{n-1}$. Though Theorem \ref{thm2} states that, conditioned on the top half and the bottom half having been split correctly, minimizing $\mathcal{L}_c^e$ will recover the true permutation, this condition might not be necessary. In fact, we've done extensive numerical experiments to test whether $\mathcal{L}_c^e$ is not consistent with the 0-1 loss, yet no counterexamples have been found.
\end{remark}

From the proof, we can see that there are basically two ways to let $\mathcal{L}_c^e$ decrease in the permutation space:
\begin{itemize}
	\item putting higher scores at the top, lower at the bottom.
	\item putting small score differences (less distinguishable) pairs at the mid.
\end{itemize}
The second tendency, on the one hand, brings upon difficulty in proving the permutation level 0-1 consistency; on the other hand, it actually coincides with the stock market: the cross-sectional distribution of stock returns are approximately normal and those who lie in the middle have very little differences. From a long-short perspective, it is reasonable to put aside stock pairs that we have no opinion whether one will dominate the other in terms of return.

\section{Empirical studies}
\label{empirical}
In this section, we explore the empirical performance of ListFold in China A-share market.  Our data and code have been made open access at Github: \url{https://github.com/TCtobychen/ListFold}. We denote ListFold-sgm and ListFold-exp respectively for $\mathcal{L}_c^s$ and $\mathcal{L}_c^e$. We will compare our model with multilayer layer perception (MLP), ListMLE and Song's fitting two ListMLE (denoted as List2MLE). Since our work focuses on the loss function, we will adopt a same neural network structure and training method for all these algorithms. MLP stands as a value prediction representative and the others represent the rank wisdom. The loss function for MLP is as follows:
$$
\mathcal{L}_{MLP}(f,r) = \frac{1}{n}\sum_{i=1}^{n} (r_i-f_i)^2.
$$

For different algorithms we construct long-short portfolios accordingly. Then the evaluation metric is taken to be the portfolio performance measure. As far as rank prediction is concerned, a generalized NDCG is proposed for evaluation as well. This section is divided into three parts: a brief summary of the data and training, the network structure for the scoring function and the evaluation of the performance.

\subsection{Data and Training}
Our dataset consists of 631 weekly observation on 3712 stocks with 68 factors.\footnote{Data is obtained from the Wind database, \url{https://www.wind.com.cn/NewSite/data.html}}. The date is from 2006-12-29 to 2019-04-19. By setting a threshold that the percentage of missing values is less than 0.1\%, we filter out 80 stock. These 80 stocks are mostly listed in the HS300 index and they are highly liquid. The 68 factors are mainly some common factors and their names are listed below in Table \ref{factornames}.

\begin{table}
	\centering
	\caption{Factor names}
	\vspace{0.5ex}
	\label{factornames}
	\resizebox{\textwidth}{20mm}{
	\begin{tabular}{ccccccc}
	\hline
	\hline
	alpha\_100w & amount\_21 & amount\_5 & amount\_63 & amount\_div & avg\_volume\_21 & avg\_volume\_5 \\
	avg\_volume\_63 & beta\_100w & close\_low\_high & close\_s\_vwap5 & close\_vwap5 & c\_l2\_ibm & dlt\_miclo \\
	highlow\_1 & highlow\_12 & highlow\_3 & highlow\_6 & ibm\_close & ibm\_svlo & IR\_netasset\_252 \\
	IR\_roe\_252 & l2\_ibm\_ewma & l2\_lbm\_ewma & magm\_yop & ma\_crossover\_15\_36 & net\_assets& n\_buy\_value\_small\_order \\
	pb & pcf\_gm  & z\_sde\_pe & q\_s\_fa\_yoyocf & rank\_amount\_div & rank\_close\_low\_high & rt\_10 \\
	\hline
	rt\_126 & rt\_12\_1 & rt\_15 & rt\_21 & rt\_252 & rt\_5 & rt\_5\_Skewness\_10 \\
	rt\_5\_Skewness\_15 & rt\_5\_Skewness\_20 & rt\_5\_Skewness\_5 & rt\_63 & std\_deviation\_100w & yop\_pe & s\_dq\_mv \\
	yop\_pcf & s\_val\_mv & z\_rank\_pe & trk\_rk\_pe\_re & ttm\_pcf & ttm\_pe & ttm\_ps \\
	ttm\_roa & ttm\_roe & turnover\_21 & turnover\_5 & turnover\_63 & vol\_1 & vol\_12 \\
	vol\_3 & vol\_6 & yieldvol\_1m & yieldvol\_3m & yieldvol\_6m\\
	\hline
	\end{tabular}}
\end{table}

Since the dataset spreads over 13 years we train the model in a rolling basis. We use every 300 weeks as the training set and the next 16 weeks as the test set. We split the data into mini batches of size 32 and perform the min-max normalization every 300 weeks. In total, we got 320 weekly data as the test set that rolls from 2012-11-30 to 2019-02-01. During each training process, we let each algorithm view 1000 mini batches.

\subsection{Scoring Function}
We use a 4-layer fully-connected neural network to learn the scoring function $f$. Namely, the shape of the network structure is $[68\times 136 \times 272 \times 34\times 1]$. For each layer we embed a ReLU layer as the activation function \footnote{For MLP the last ReLU layer is not necessary since there are negative returns.}. The reason why we extend the dimension of features in the hidden layers is that we want the factors to interact with each other to generate more features. Note that this scoring function is the same for all documents, i.e. the parameters are the same for $f_1,...,f_n$.
 
\begin{figure}[H]
	\centering
	\includegraphics[scale=0.4]{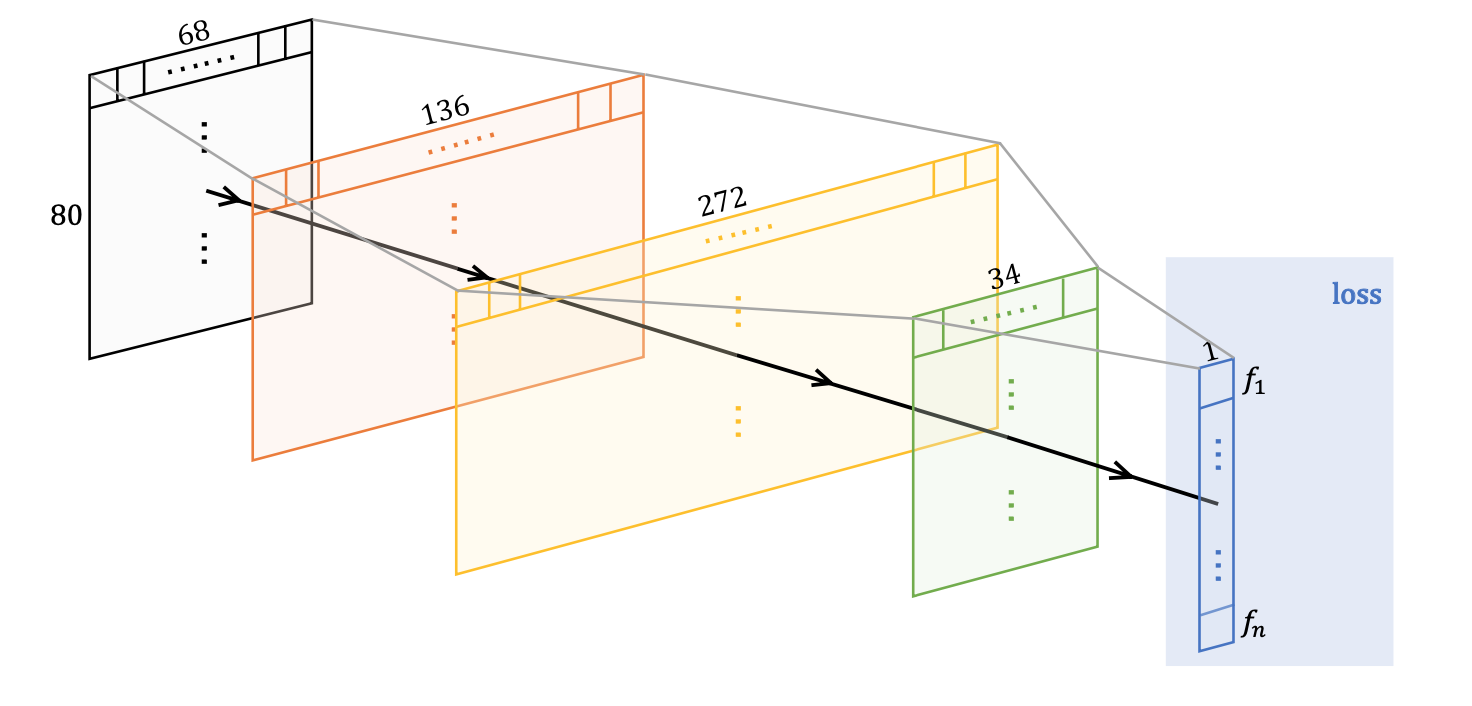}
	\label{mlp}
	\caption{The neural network to learn the scoring function}	
\end{figure}

We emphasize that we use no future information. We regard each week as independent sample and our network doesn't generate features along the time axis. This enables us to do the mini batch shuffling seamlessly in the training procedure. In practice, we can generalize the model to include the information along the time axis simply by feeding those information as time series factors. 

\subsection{Performance}
To evaluate our model, we first investigate the out-of-sample performance of our portfolio. Then we give evaluations from a ranking perspective in the hope that it can help us better decompose the profit and indicate our model's potential in IR criterions.

\subsubsection{Portfolio Metric}
Based on the obtained rank prediction, we build two kinds of strategies. One is to long the top 10\% stocks and short the bottom 10\%. The other is to long the top 10\% stocks and short the average of all the stocks. We allocate equal weight to each stock within the same direction. Note that for MLP the rank prediction will be given by sorting the predicted returns. The motivation for shorting the average is to approximate the long-short performance if we have to use the index future as a substitute of the short leg. Together we will have \textit{ListFold-exp, ListFold-sgm, ListMLE, List2MLE, MLP} and their version of shorting the average, denoted as: \textit{ListFold-exp-sa, ListFold-sgm-sa, ListMLE-sa, MLP-sa}. Note that for List2MLE, we've assigned what to short so we won't consider List2MLE shorting the average. At each week, we invest a fixed amount of nominal capital \$1. The out-of-sample pnl without transaction fee is plotted in Figure \ref{pnl-ls8}, with the average being a proxy of the baseline.
\begin{figure}
	\centering
	\includegraphics[width=\textwidth]{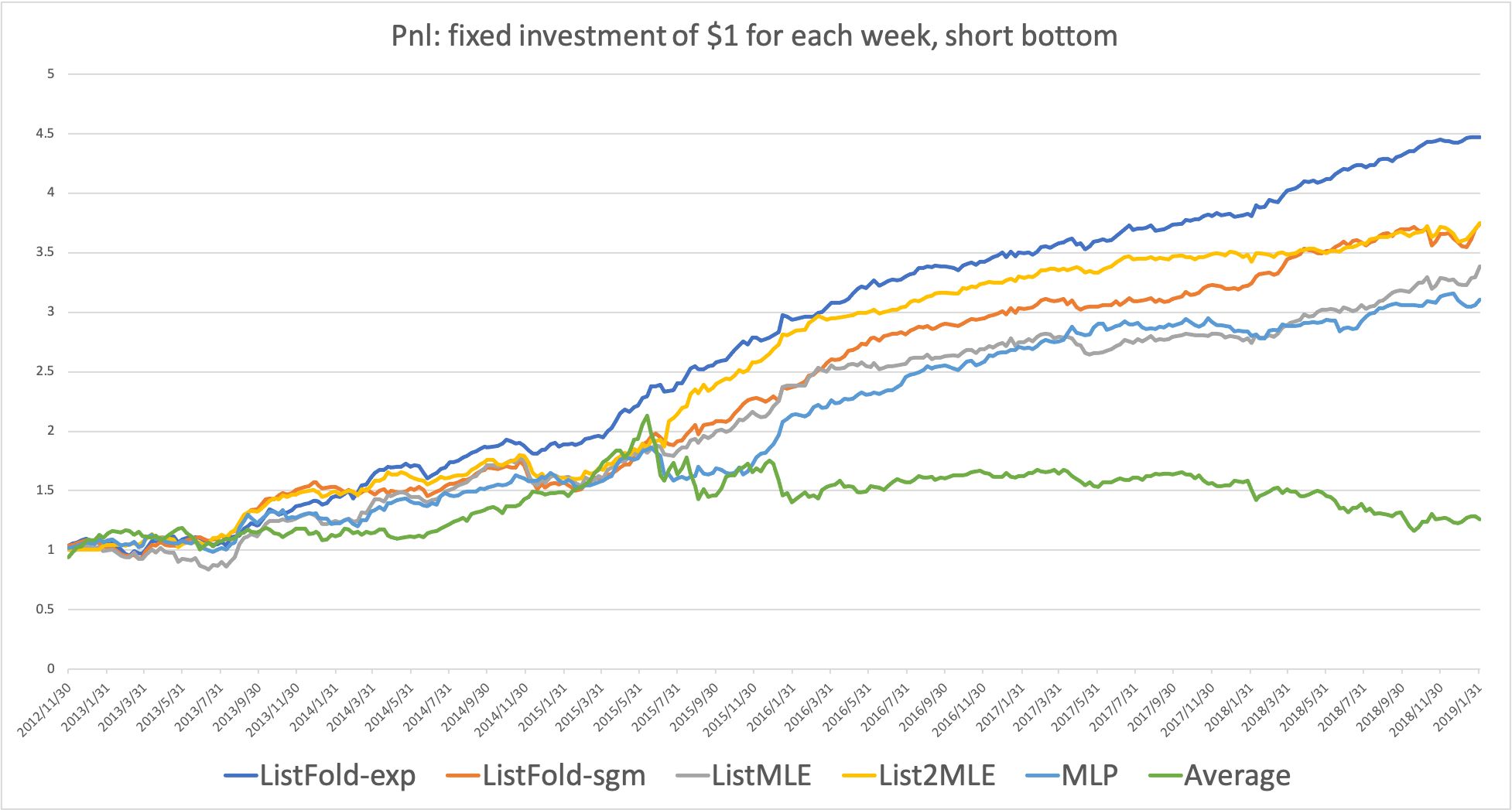}
	\caption{Pnl: fixed investment, long short 8 stocks}
	\label{pnl-ls8}
\end{figure}
\begin{figure}
	\centering
	\includegraphics[width=\textwidth]{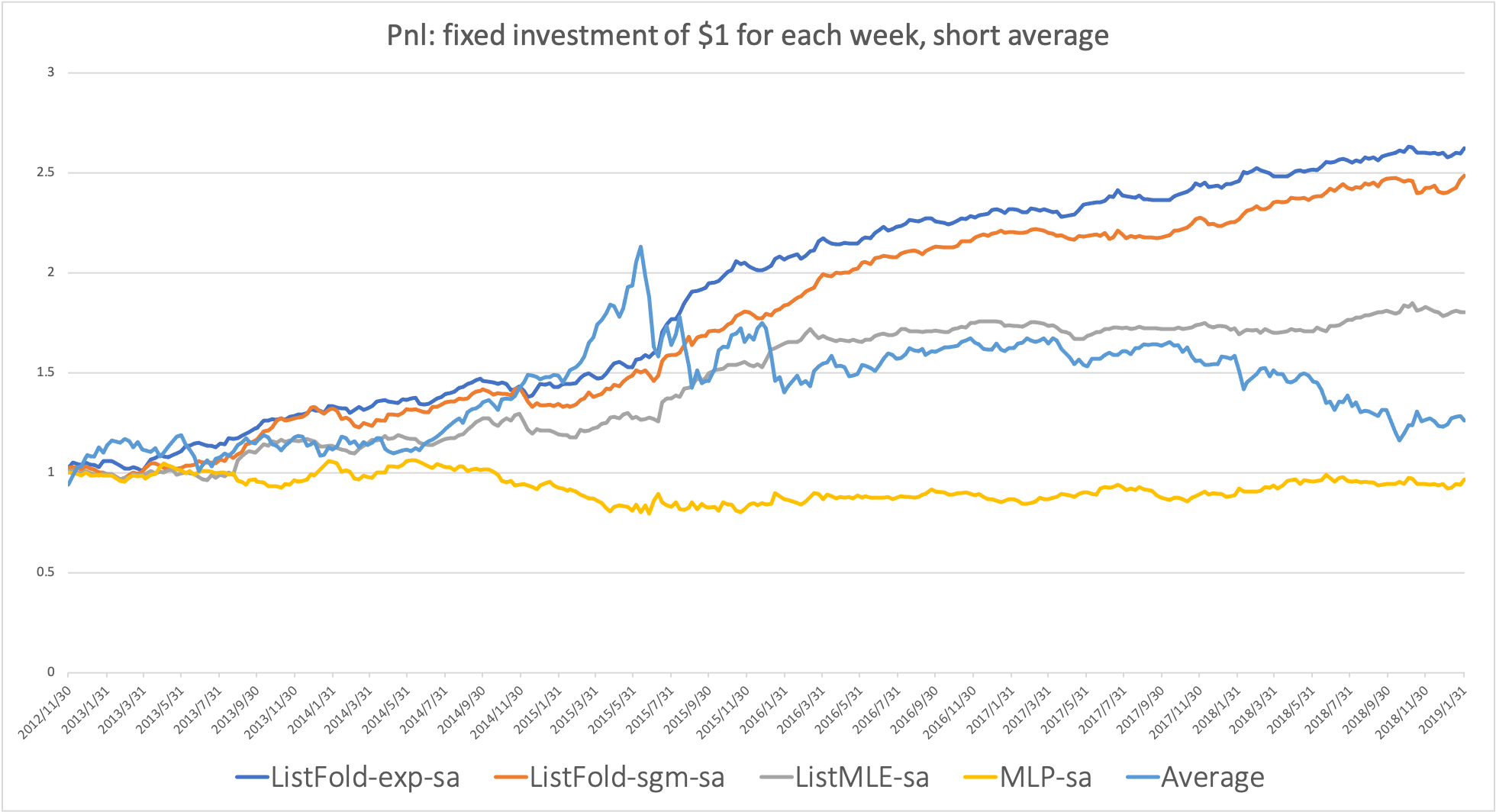}
	\caption{Pnl: fixed investment, long 8 short the average}
	\label{pnl-lsave}
\end{figure}

From Figure \ref{pnl-ls8}, we can see how the rank perspective contributes to the portfolio. Under the same neural network, same data and same training procedure, after 6 years the net value of MLP is only 2/3 of the net value of ListFold-exp. Also, ListFold-exp achieves the highest pnl. It is not so surprising that ListFold-exp outperforms ListFold-sgm given their consistency discussion, yet it is a surprise that ListFold-exp outperforms List2MLE. We take the difference of ListFold-exp and List2MLE for each week, and under normal assumption the t-statistics is 1.45. Looking into the detailed weekly positions, we find List2MLE indeed picks out some stocks to both long and short at the same time: the average number of the overlap is 0.42 stock per week. 

Comparing Figure \ref{pnl-ls8} and Figure \ref{pnl-lsave} we observe that ListFold-exp and ListFold-sgm has more advantage in the long leg. Almost all the excess return between listFold-exp and MLP comes from the long leg. Since 2016 ListMLE and MLP has ceased to be profitable if they short the average, whereas both ListFold methods still print money. One explanation could probably be that machine learning algorithms crowded into China A share since then.

Setting the annualized risk free rate $r_f$ at 3\%, the transaction cost per trade 30 bps in total (such as tax, spread crossing and getting short), we summarize the mean, standard deviation, Sharpe ratio and max drawdown in Table \ref{pnl-stat}. We also calculate the average trading turnover(TRV for short) which is the non-overlapped stock ratio for two consecutive weeks' positions. If TRV is less than 1, we can suffer less transaction fee. The first panel displays the strategies that short the bottom and the second panel corresponds to shorting the average.
\begin{table}
	\centering
	\caption{Portfolio statistics of the strategies}
	\vspace{0.5ex}
	\label{pnl-stat}
	\begin{tabular}{cccccc}
	\hline \hline
	 & ListFold-exp & ListFold-sgm & ListMLE & List2MLE & MLP\\
	$\mu-r_f$ & 0.38 & 0.26 & 0.20 & 0.26 & 0.16 \\
	$\sigma$ & 0.19 & 0.20 & 0.22 & 0.20 & 0.22 \\
	SR & 2.01 & 1.27 & 0.91 & 1.29 & 0.72 \\
	MDD & 0.14 & 0.25 & 0.23 & 0.21 & 0.28 \\
	TRV & 0.48 & 0.45 & 0.45 & 0.46 & 0.39 \\
	\hline
	 & ListFold-exp-sa & ListFold-sgm-sa & ListMLE-sa & List2MLE-sa & MLP-sa\\
	
	$\mu-r_f$ & 0.08 & 0.06 & -0.06 & $\times$ & -0.19 \\
	$\sigma$ & 0.11 & 0.11 & 0.10 & $\times$ & 0.11 \\
	SR & 0.71 & 0.50 & -0.53 & $\times$ & -1.79 \\
	MDD & 0.09 & 0.10 & 0.12 & $\times$ & 0.27 \\
	\hline \hline
	\end{tabular}
\end{table}

All these strategies have a low volatility due to their long-short nature. ListFold-exp outperforms List2MLE with a slightly higher turnover. For a larger stock pool and higher frequency data we expect the turnover would increase significantly.

\subsubsection{Rank Metric}

For the rank metric, we use Spearman's $\rho$, NDCG. We also propose $\text{NDCG}@\pm k$ as a generalization of NDCG that emphasizes both the top and the bottom:
\begin{definition}
$\text{NDCG}@\pm k$ is defined as the average of NDCG@k and the reverse labeled NDCG@-k:
	\begin{align}
	\text{NDCG}@\pm k(\pi,l) &= (\text{NDCG@k}(\pi,l)+\text{NDCG@-k}(\pi,\tilde{l}))/2 \notag \\
	&= \frac{1}{2Z_k}\left(\sum_{j=1}^{k}G(l_{\pi^{-1}(j)})\eta(j) + \sum_{j=1}^{k}G(\tilde{l}_{\tilde{\pi}^{-1}(j)})\eta(j)\right), \notag
\end{align}
\end{definition}
\noindent where $\tilde{l}, \tilde{\pi}$ are the reversed label and the revered list. They form the symmetric metric of NDCG@k at the bottom of a list. For instance, if $\pi = [a,b,c,d],l=[3,2,4,1]$ which implies that the true permutation $\pi_0$ is $[d,a,b,c]$, then $\tilde{\pi} = [d,c,b,a], \tilde{l} = [4,1,3,2]$. Note that the Spearman's $\rho$ is the Information Coefficient (IC) calculated in rank. To use NDCG type metrics, we also need to transform the returns into levels: we label the top 10\% as 10, then the top 10\% to 20\% as 9 etc until the bottom 10\% are labeled 1. We denote ListMLE-rvs for the ListMLE with reverse labelling. The statistics of rank metric are summarized in Table \ref{rank-stat}.
\begin{table}
	\centering
	\caption{Statistics of rank metric}
	\vspace{0.5ex}
	\label{rank-stat}
	\begin{tabular}{cccccc}
	\hline \hline
	   & ListFold-exp & ListFold-sgm & ListMLE & ListMLE-rvs & MLP\\
	IC & \boxed{0.079} & 0.077 & 0.077 & 0.057 & 0.055\\
	NDCG & 0.613 & 0.611 & 0.630 & \boxed{0.632} & 0.628 \\
	NDCG@8 & 0.229 & 0.234 & 0.183 & \boxed{0.244} & 0.201\\
	NDCG@-8 & 0.301 & 0.286 & \boxed{0.324} & 0.280 & 0.238\\
	NDCG@$\pm$8 & \boxed{0.265} & 0.260 & 0.254 & 0.262 & 0.219\\
	\hline
	\end{tabular}
\end{table}
Although LisMLE has a high NDCG, as many previous study has suggested, its NDCG@8 is surprisingly not as good on our dataset. List2MLE take the long leg of ListMLE and the short leg of ListMLE-rvs so its NDCG$\pm$8 is actually the lowest. To some extent, this undermines the credit of List2MLE. Moreover, IC coincides with the pnl performance fairly well which gives endorsement to IC's popularity among industries.

 
\subsection{Robustness}
In this part, we first examine the robustness of the position cutoff parameter $k$, i.e. the top-$k$ and bottom-$k$ we decide to long and short in our portfolio. For different models, we plot out a heatmap for different number of stocks to long-short. The columns are the models and the row stands for the position cutoff parameter $k$. For example, for the 8th row we long the top 8 stocks and short the bottom 8 stocks. The number in each cell is the average weekly return (in bps) we achieve out of sample, without transaction fee. The larger this value, the warmer the color.

\begin{figure}
	\centering
	\includegraphics[width=0.4\textwidth]{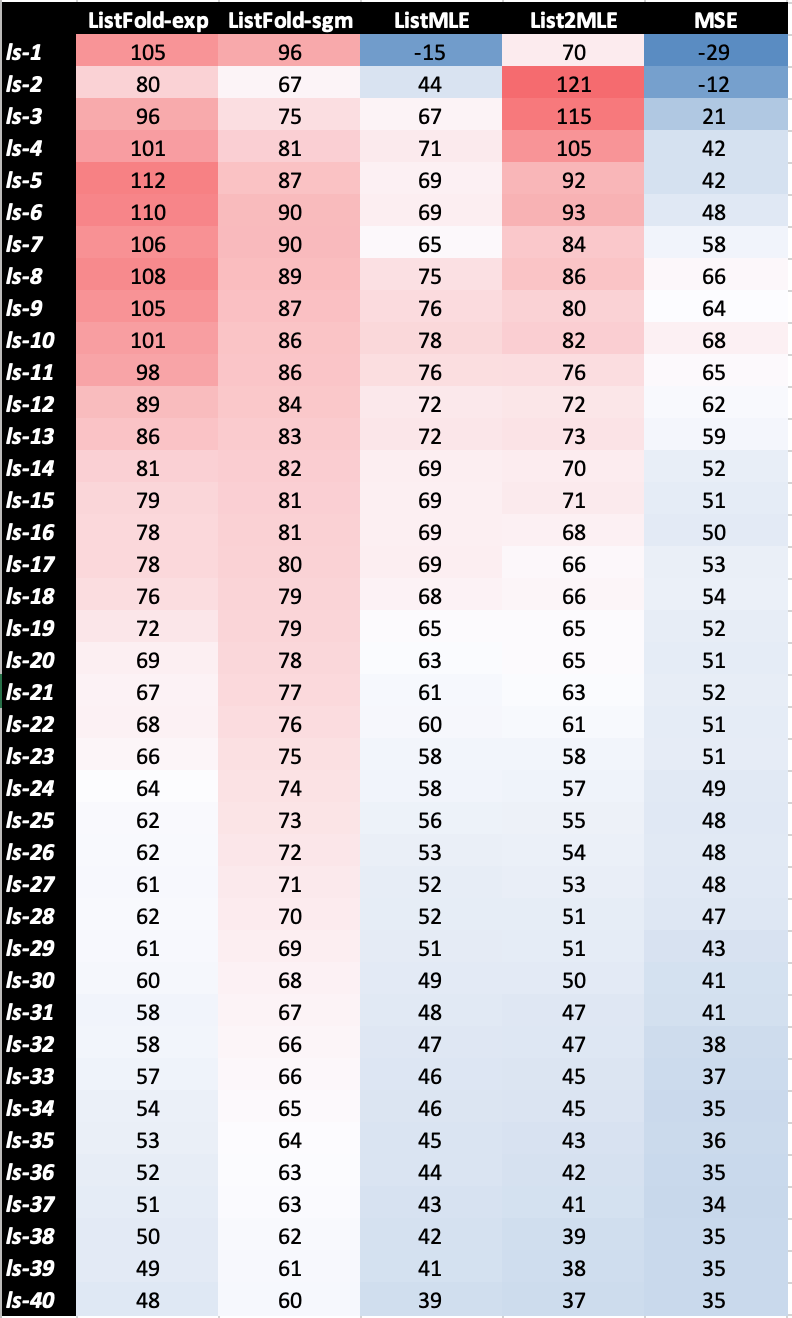}
	\caption{Heatmap for long-short top-k pairs}
	\label{heatmap}
\end{figure}

If our prediction matches the ground truth perfectly, we would expect the pnl of the portfolio  gradually decreases towards 0 as $k$ goes from 1 to 80. Figure \ref{heatmap} demonstrates vividly the advantage of ListFold-exp. All the four rank methods are robust regarding $k$. If the value from LS-1 to LS-8 is approximately decreasing, we can also promote the pnl by assigning larger weight to the precedent pairs. In addition to that, we also notice ListFold-sgm has the best long-short-40 performance among all five models, and it's performance is stable. This coincides with the fact that ListFold-sgm is consistent with the binary classification loss. So if one's task is to long 50\% and short 50\%, then ListFold-sgm should be a good choice.

Next, we examine the robustness of the mini batch sizes for the ranking methods. We train the model under different mini batch sizes and keep the model always view 1000 batches. The average weekly return in bps is summarized in Table \ref{mini batch size}. Overall a small mini batch size is not suitable for ListFold-exp and ListFold-sgm due to the non-convexity. Although a more complicated loss function generally requires more data, early stopping rule is also necessary.
\begin{table}
	\centering
	\caption{Average weekly return of long-short 8 stocks (in bps)}
	\vspace{0.5ex}
	\label{mini batch size}
	\begin{tabular}{cccccc}
	\hline \hline
	mini batch size & ListFold-exp & ListFold-sgm & ListMLE & ListMLE-rvs & List2MLE\\
	8 & 57.1 & 55.9 & \boxed{85.6} & 78.3 & 78.4\\
	16 & 72.6 & 79.9 & \boxed{82.3} & 71.0 & 73.5\\
	32 & \boxed{108.1} & 88.7 & 74.5 & 90.8 & 85.9\\
	64 & \boxed{99.3} & 76.2 & 88.1 & 70.3 & 78.3\\
	128 & \boxed{84.0} & 82.0  & 76.3 & 77.6 & 77.4\\
	\hline
	\end{tabular}
\end{table}

Finally we would like to mention that our strategy is not restricted to a weekly manner nor 80 stocks. It can be seamlessly transferred to the practitioner's own factor dataset. The high Sharpe ratio of our model also encourages leverage in real production. Since the China A-share market goes through an extraordinary bull-bear circle in our testing period, our strategy has demonstrated its robustness against the market turbulence.

\section{Conclusion}
\label{conclusion}
We've suggested a new perspective of combining factors for constructing long-short portfolio. Following the learn-to-rank method we've proposed a new type of loss function that aims to select long-short pairs listwisely. It is shift-invariant and probabilistically explained. For different transformation functions it can consist with binary classification loss or permutation level 0-1 loss. Our model can be viewed as a supplementary tool for investigating non-order sensitive loss and it may also inspire frameworks that unify pairwise and listwise surrogate loss functions.

Our empirical study in China A-share market has achieved 39\% annualized return with a Sharpe ratio of 2.07 for 6 years. Not only does it demonstrate the advantage of rank prediction over value prediction, but also it depicts the power of our loss function, especially the ListFold-exp. We've done a thorough evaluation of the different models from both a financial and ranking perspective. It turns out that the loss functions we proposed have significant advantage over the others. A byproduct of our research is that we empirically verify IC is better than NDCG type ranking metrics for evaluating alpha strategies.

For future study, regarding the theoretical analysis, it is worthwhile to further investigate the consistency of ListFold-exp. More generally, to give a characterization of different transformation functions that lead to different true loss functions. From a practitioner's view, it also worths to try ListFold on their own factor dataset or neural network. As a matter of fact, in view of getting short is usually more expensive, one may want to be more correct in average on the bottom ranks. So it might also be interesting to combine ListFold with other loss functions or add neural network components (like auxiliary task) to the current architecture.
\newpage
\bibliographystyle{rQUF}
\bibliography{Reference_QF0317}

\begin{thebibliography}{26}
\providecommand{\natexlab}[1]{#1}
\providecommand{\noopsort}[1]{}
\providecommand{\printfirst}[2]{#1}
\providecommand{\singleletter}[1]{#1}
\providecommand{\switchargs}[2]{#2#1}

\bibitem[\protect\citeauthoryear{Carhart}{1997}]{carhart1997persistence}
Carhart, M.M., On persistence in mutual fund performance. {\itshape The Journal
  of Finance}, 1997, \textbf{52}, 57--82.

\bibitem[\protect\citeauthoryear{Cremers
  {\itshape{et~al.}}}{2019}]{cremers2019challenging}
Cremers, K.M., Fulkerson, J.A. and Riley, T.B., Challenging the conventional
  wisdom on active management: A review of the past 20 years of academic
  literature on actively managed mutual funds. {\itshape Financial Analysts
  Journal}, 2019, \textbf{75}, 8--35.

\bibitem[\protect\citeauthoryear{De~Prado}{2018}]{de2018advances}
De~Prado, M.L., {\itshape Advances in financial machine learning}, 2018, John
  Wiley \& Sons.

\bibitem[\protect\citeauthoryear{Fama and French}{1993}]{fama1993common}
Fama, E.F. and French, K.R., Common risk factors in the returns on stocks and
  bonds. {\itshape Journal of Financial Economics}, 1993, \textbf{33}, 3--56.

\bibitem[\protect\citeauthoryear{Fang
  {\itshape{et~al.}}}{2020}]{fang2020neural}
Fang, J., Lin, J., Xia, S., Xia, Z., Hu, S., Liu, X. and Jiang, Y., Neural
  network-based automatic factor construction. {\itshape Quantitative Finance},
  2020, \textbf{20}, 2101--2114.

\bibitem[\protect\citeauthoryear{Feng
  {\itshape{et~al.}}}{2019}]{feng2019temporal}
Feng, F., He, X., Wang, X., Luo, C., Liu, Y. and Chua, T.S., Temporal
  relational ranking for stock prediction. {\itshape ACM Transactions on
  Information Systems (TOIS)}, 2019, \textbf{37}, 27.

\bibitem[\protect\citeauthoryear{Feng
  {\itshape{et~al.}}}{2020}]{feng2020taming}
Feng, G., Giglio, S. and Xiu, D., Taming the factor zoo: A test of new factors.
  {\itshape The Journal of Finance}, 2020, \textbf{75}, 1327--1370.

\bibitem[\protect\citeauthoryear{Gao and Yang}{2014}]{gao2014democracy}
Gao, W. and Yang, P., Democracy is good for ranking: Towards multi-view rank
  learning and adaptation in web search. In {\itshape Proceedings of the
  }{\itshape Proceedings of the 7th ACM international conference on Web search
  and data mining}, pp. 63--72, 2014.

\bibitem[\protect\citeauthoryear{Giglio
  {\itshape{et~al.}}}{2019}]{giglio2019thousands}
Giglio, S., Liao, Y. and Xiu, D., Thousands of alpha tests. {\itshape Chicago
  Booth Research Paper}, 2019, pp. 2018--16.

\bibitem[\protect\citeauthoryear{Harvey
  {\itshape{et~al.}}}{2016}]{harvey2016and}
Harvey, C.R., Liu, Y. and Zhu, H., … and the cross-section of expected
  returns. {\itshape The Review of Financial Studies}, 2016, \textbf{29},
  5--68.

\bibitem[\protect\citeauthoryear{Hou
  {\itshape{et~al.}}}{2017}]{hou2017comparison}
Hou, K., Xue, C. and Zhang, L., A comparison of new factor models. {\itshape
  Fisher College of Business Working Paper}, 2017, p.~05.

\bibitem[\protect\citeauthoryear{Jacobs and Levy}{1993}]{jacobs1993long}
Jacobs, B.I. and Levy, K.N., Long/short equity investing. {\itshape Journal of
  Portfolio Management}, 1993, \textbf{20}, 52.

\bibitem[\protect\citeauthoryear{Lan
  {\itshape{et~al.}}}{2014}]{lan2014position}
Lan, Y., Zhu, Y., Guo, J., Niu, S. and Cheng, X., Position-Aware ListMLE: A
  Sequential Learning Process for Ranking.. In {\itshape Proceedings of the
  }{\itshape UAI}, pp. 449--458, 2014.

\bibitem[\protect\citeauthoryear{Lintner}{1965}]{lintner1965security}
Lintner, J., Security prices, risk, and maximal gains from diversification.
  {\itshape The Journal of Finance}, 1965, \textbf{20}, 587--615.

\bibitem[\protect\citeauthoryear{Liu
  {\itshape{et~al.}}}{2016}]{liu2016semiparametric}
Liu, H., Mulvey, J. and Zhao, T., A semiparametric graphical modelling approach
  for large-scale equity selection. {\itshape Quantitative Finance}, 2016,
  \textbf{16}, 1053--1067.

\bibitem[\protect\citeauthoryear{Rasekhschaffe and
  Jones}{2019}]{rasekhschaffe2019machine}
Rasekhschaffe, K.C. and Jones, R.C., Machine learning for stock selection.
  {\itshape Financial Analysts Journal}, 2019, \textbf{75}, 70--88.

\bibitem[\protect\citeauthoryear{Sch{\"a}fer and
  H{\"u}llermeier}{2015}]{schafer2015dyad}
Sch{\"a}fer, D. and H{\"u}llermeier, E., Dyad ranking using a bilinear
  Plackett-Luce model. In {\itshape Proceedings of the }{\itshape Joint
  European Conference on Machine Learning and Knowledge Discovery in
  Databases}, pp. 227--242, 2015.

\bibitem[\protect\citeauthoryear{Sharpe}{1964}]{sharpe1964capital}
Sharpe, W.F., Capital asset prices: A theory of market equilibrium under
  conditions of risk. {\itshape The Journal of Finance}, 1964, \textbf{19},
  425--442.

\bibitem[\protect\citeauthoryear{Silverberg}{1980}]{silverberg1980statistical}
Silverberg, A.R., Statistical models for q-permutations. {\itshape Doctoral
  dissertation, Princeton Univ., Dept. Stat.}, 1980.

\bibitem[\protect\citeauthoryear{Song and Meyer}{2015}]{song2015recommending}
Song, D. and Meyer, D.A., Recommending positive links in signed social networks
  by optimizing a generalized AUC. In {\itshape Proceedings of the }{\itshape
  Proceedings of the twenty-ninth AAAI conference on artificial intelligence},
  pp. 290--296, 2015.

\bibitem[\protect\citeauthoryear{Song {\itshape{et~al.}}}{2017}]{song2017stock}
Song, Q., Liu, A. and Yang, S.Y., Stock portfolio selection using
  learning-to-rank algorithms with news sentiment. {\itshape Neurocomputing},
  2017, \textbf{264}, 20--28.

\bibitem[\protect\citeauthoryear{Tax {\itshape{et~al.}}}{2015}]{tax2015cross}
Tax, N., Bockting, S. and Hiemstra, D., A cross-benchmark comparison of 87
  learning to rank methods. {\itshape Information processing \& management},
  2015, \textbf{51}, 757--772.

\bibitem[\protect\citeauthoryear{Tulchinsky}{2019}]{tulchinsky2019finding}
Tulchinsky, I., {\itshape Finding Alphas: A quantitative approach to building
  trading strategies}, 2019, John Wiley \& Sons.

\bibitem[\protect\citeauthoryear{Wang and Rasheed}{2018}]{wang2018stock}
Wang, L. and Rasheed, K., Stock Ranking with Market Microstructure, Technical
  Indicator and News. In {\itshape Proceedings of the }{\itshape Proceedings on
  the International Conference on Artificial Intelligence (ICAI)}, pp.
  322--328, 2018.

\bibitem[\protect\citeauthoryear{Xia
  {\itshape{et~al.}}}{2008}]{xia2008listwise}
Xia, F., Liu, T.Y., Wang, J., Zhang, W. and Li, H., Listwise approach to
  learning to rank: theory and algorithm. In {\itshape Proceedings of the
  }{\itshape Proceedings of the 25th international conference on Machine
  learning}, pp. 1192--1199, 2008.

\bibitem[\protect\citeauthoryear{Zhu {\itshape{et~al.}}}{2011}]{zhu2011hybrid}
Zhu, M., Philpotts, D., Sparks, R. and Stevenson, M.J., A hybrid approach to
  combining CART and logistic regression for stock ranking. {\itshape The
  Journal of Portfolio Management}, 2011, \textbf{38}, 100--109.

\end{thebibliography}

\newpage
\appendix
\renewcommand{\appendixname}{Appendix~\Alph{section}}
\section{More on the consistency of $\mathcal{L}_c^e$}
\label{appendixA}
We want to get rid of the assumptions of Theorem \ref{thm2} and consider $\bold{f}$ as a permutation on all $a_i's$ and $b_i's$. So we continue with more discussions on $\alpha$'s ground truth position. 
\vspace{2ex}
  
$\diamond$ CASE \ 2

If $b_{n-1} \leq \alpha < b_n$, similar with the argument above, we took the difference 
\begin{align}
	\Delta \mathcal{L}:&=\mathcal{L}([a_1,...,a_n, b_n,\alpha,b_{n-1},...,b_1,\beta]) - \mathcal{L}([\alpha, a_1,...,a_n, b_n,...,b_1,\beta])\notag \\
	&= 2(\alpha- b_n) + \sum_{k=1}^{n}\ell(S_k^{\alpha}) - \ell(S_k). \label{ieq5} 
\end{align}
and denote $c_k=2(n+1-k)$ we have
\begin{align}
	\ell(S_k^{\alpha}) &- \ell(S_k) = \log \left( 
	1 + \frac{\delta_k-1}{\delta_k} \cdot \frac{\delta_k \sum_{S_k}e^{f_s-a_k} - \sum_{S_k}e^{a_k-f_s} - (\delta_k-1)}{\sum_{S_k}e^{f_s}\sum_{S_k}e^{-f_s}-c_k} \right)\notag \\
	&= \log \left( 
	1 + \frac{\delta_k-1}{\delta_k} \cdot \left[ \frac{\delta_k \sum_{S_k-b_n}e^{f_s-a_k} - \sum_{S_k-b_n}e^{a_k-f_s} - (\delta_k-1)}{\sum_{S_k}e^{f_s}\sum_{S_k}e^{-f_s}-c_k} +
	\frac{\delta_ke^{b_n-a_k}-e^{a_k-b_n}}{\sum_{S_k}e^{f_s}\sum_{S_k}e^{-f_s}-c_k}\right ]
	\right) \notag \\
	& \leq \log \left( 
	1 + \frac{(\delta_k-1)^2}{\delta_k} \cdot \frac{\#_{\{S_k-bn,f_s>\alpha\}}- \#_{\{S_k-b_n,f_s\leq\alpha\}}-1}{\sum_{S_k}e^{f_s}\sum_{S_k}e^{-f_s}-c_k} + 
	\frac{e^{b_n-\alpha}+e^{\alpha-b_n}-e^{a_k-b_n}-e^{b_n-a_k}}{\sum_{S_k}e^{f_s}\sum_{S_k}e^{-f_s}-c_k}
	\right)\notag \\
	& \leq \log\left( 1 + \frac{e^{b_n-\alpha}+e^{\alpha-b_n}-e^{a_k-b_n}-e^{b_n-a_k}}{\sum_{S_k}e^{f_s}\sum_{S_k}e^{-f_s}-c_k}
	\right).
\end{align}
The last inequality follows from the fact that 
$$
\#_{\{S_k-bn,f_s>\alpha\}}- \#_{\{S_k-b_n,f_s\leq\alpha\}}-1 = 0, \quad \text{for} \ b_{n-1}\leq \alpha < b_n.
$$	
Also note that in the denominator $\sum_{S_k}e^{f_s}\sum_{S_k}e^{-f_s}=\sum_{f_s,f_t \in S_k}e^{f_s-f_t}$ is a function of the pairwise differences, and the difference between the elements in $\{S_k,f_s>\alpha\}$ and $\{S_k,f_s\leq\alpha\}$ is at least $b_n -\alpha$, so 
$$
\sum_{S_k}e^{f_s}\sum_{S_k}e^{-f_s} - c_k \geq \#_{\{S_k, f_s>\alpha\}}\cdot\#_{\{S_k, f_s\leq\alpha\}}\cdot(e^{b_n-\alpha}+e^{\alpha-b_n})
$$
Therefore, using ${e^{a_k-b_n}+e^{b_n-a_k}} \geq 2$, we have
\begin{align}
	\sum_{k=1}^{n}\ell(S_k^{\alpha}) - \ell(S_k) &\leq 	\sum_{k=1}^{n}\log\left( 1 + \frac{e^{b_n-\alpha}+e^{\alpha-b_n}-e^{a_k-b_n}-e^{b_n-a_k}}{\sum_{S_k}e^{f_s}\sum_{S_k}e^{-f_s} - c_k}
	\right)\notag \\
	& \leq \sum_{k=1}^{n-1} \log\left(1+\frac{e^{b_n-\alpha}+e^{\alpha-b_n}-2}{k(k+1)(e^{b_n-\alpha}+e^{\alpha-b_n})}\right) + \log \left(\frac{e^{b_n-\alpha}+e^{\alpha-b_n}}{e^{a_n-b_n}+e^{b_n-a_n}}\right) \notag \\
	& \leq (1-\frac{2}{{e^{b_n-\alpha}+e^{\alpha-b_n}}})(\frac{1}{1\cdot2}+\frac{1}{2\cdot3}+...) + \log (\frac{e^{b_n-\alpha}+e^{\alpha-b_n}}{2}) \notag \\
	& \leq (b_n -\alpha) + (b_n-\alpha) = 2(b_n-\alpha).
\end{align}
The last inequality follows from the inequality
$$
x + \frac{2}{e^x+e^{-x}} \geq 1, \text{for} \ x \geq 0,
$$
which can be easily proven by taking the derivative.
Hence, we have proved that equation (\ref{ieq5}) is less equal to zero, so the result stays true in this case. 

\vspace{2ex}

$\diamond$ CASE \ 3

If $b_{i} \leq \alpha < b_{i+1} \ \text{for some } 1\leq i \leq n-2$, similar with the argument above, we took the difference 
\begin{align}
	\Delta \mathcal{L}:&=\mathcal{L}([a_1,...,a_n, b_n,...,b_{i+1},\alpha,b_i,...,b_1,\beta]) - \mathcal{L}([\alpha, a_1,...,a_n, b_n,...,b_1,\beta])\notag \\
	&= 2(\alpha- b_n) + \sum_{k=1}^{i+1} \left( \ell(S_k^{\alpha}) - \ell(S_k) \right) + \sum_{k=i+2}^{n}\left(\ell(S_k^{b_{k-1}}) - \ell(S_k) \right ). \label{ieq8} 
\end{align}
For items of the form $\ell(S_k^{\alpha}) - \ell(S_k)$, by simple counting $\#_{\{a_k,...,a_{i+1}\}}=\#_{\{b_k,..,b_{i}\}}+1$, and the denotation that $M_i = \{b_{i+1},...,a_{i+2}\}, \gamma = a_{i+1}$, we got 
\begin{align}
	\ell(S_k^{\alpha}) - \ell(S_k) 
	& \leq \log \left (1 + \frac{\sum_{f_s\in M_i} \cosh (\alpha -f_s) -  \cosh (a_k -f_s)}{\sum_{f_s,f_t\in S_k, s\neq t} \cosh (f_s-f_t) }\right) \notag \\
	& \leq \log \left (1 + \frac{\sum_{f_s\in M_i} \cosh (\alpha -f_s) -  \cosh (\gamma -f_s)}{\sum_{f_s,f_t\in S_k, s\neq t} \cosh (f_s-f_t) }\right) \notag \\
	& \leq \log \left (1 + \frac{\sum_{f_s\in M_i} \cosh (\alpha -f_s) -  \cosh (\gamma -f_s)}{\sum_{f_s,f_t\in \{M_i+k\alpha+(k+1)\gamma\},s\neq t} \cosh (f_s-f_t) }\right) \notag \\
	& = \ell( \{M_i+(k+1)\alpha+k\gamma\}) - \ell( \{M_i+k\alpha+(k+1)\gamma\})
\end{align}
Thus we have
\begin{align}
	\Delta \mathcal{L}&= 2(\alpha- b_n) + \sum_{k=1}^{i+1}\ell(S_k^{\alpha}) - \ell(S_k) + \sum_{k=i+2}^{n}\ell(S_k^{b_{k-1}}) - \ell(S_k) \notag \\
	&\leq 2(\alpha- b_n) + \sum_{k=1}^{i+1}{\ell( \{M_i+(k+1)\alpha+k\gamma\}) - \ell( \{M_i+k\alpha+(k+1)\gamma\})}\notag \\ & \quad + \sum_{k=i+2}^{n}\ell(S_k^{b_{k-1}}) - \ell(S_k) \notag \\
	&= \mathcal{L}([\gamma,...,\gamma,M_i,\alpha,...,\alpha,\beta])- \mathcal{L}([\alpha,\gamma,...,\gamma,M_i,\alpha,...,\alpha,\beta])
\end{align}
which is a special case that $\alpha$ is the minimal number. Therefore, all we need to prove is the case that $\alpha$ is the smallest number.

Although the rest is intuitively correct that we wont't put a small number at the beginning, it is actually hard to prove analytically. We've done extensive numerical experiments and the simulation finds no counterexamples. We leave this as a future study. 

\end{document}